\documentclass{sig-alternate}
\usepackage{framed}
\usepackage{url}

\newtheorem{theorem}{Theorem}[section]
\newtheorem{definition}[theorem]{Definition}

\newtheorem{claim}[theorem]{Claim}

\newtheorem{remark}[theorem]{Remark}
\newtheorem{hypothesis}[theorem]{Hypothesis}

\def\bv{\Delta}

\begin{document}
\title{Revisiting the Examination Hypothesis with Query Specific Position Bias}

\numberofauthors{2}
\author{
% You can go ahead and credit any number of authors here,
% e.g. one 'row of three' or two rows (consisting of one row of three
% and a second row of one, two or three).
%
% The command \alignauthor (no curly braces needed) should
% precede each author name, affiliation/snail-mail address and
% e-mail address. Additionally, tag each line of
% affiliation/address with \affaddr, and tag the
% e-mail address with \email.
%
% 1st. author
\alignauthor
Sreenivas Gollapudi \\
       \affaddr{Search Labs, Microsoft Research}\\
       \email{sreenig@microsoft.com}
% 2nd. author
\alignauthor
Rina Panigrahy \\
       \affaddr{Microsoft Research}\\
       \email{rina@microsoft.com}
}

\maketitle 

\begin{abstract} 
Click through rates (CTR) offer useful user feedback that can be used to infer the relevance of search results for queries. However it is not very meaningful to look at the raw click through rate of a search result because the likelihood of a result being clicked depends not only on its relevance but also the position in which it is displayed. One model of the browsing behavior, the {\em Examination Hypothesis} \cite{RDR07,Craswell08,DP08}, states that each position has a certain probability of being examined and is then clicked based on the relevance of the search snippets. This is based on eye tracking studies~\cite{Claypool01, GJG04} which suggest that users are less likely to view results in lower positions. Such a position dependent variation in the probability of examining a document is referred to as {\em position bias}. Our main observation in this study is that the position bias tends to differ with the kind of information the user is looking for.  This makes the position bias {\em query specific}.

In this study, we present a model for analyzing a query specific position bias from the click data and use these biases to derive position independent relevance values of search results. Our model is based on the assumption that for a given query, the positional click through rate of a document is proportional to the product of its relevance and a {\em query specific} position bias. We compare our model with the vanilla examination hypothesis model (EH) on a set of queries obtained from search logs of a commercial search engine.  We also compare it with the User Browsing Model (UBM) \cite{DP08} which extends the cascade model of Craswell et al\cite{Craswell08} by incorporating multiple clicks in a query session.  We show that the our model, although much simpler to implement, consistently outperforms both EH and UBM on well-used measures such as relative error and cross entropy.
\end{abstract}

\section{Introduction} 
Click logs contain valuable user feedback that can be used to infer the relevance of search results for queries (see \cite{Agichtein06, Joachims02, Joachims05} and references within). One important measure is the click through rate of a search result which is the fraction of impressions of that result in clicks. However it is not very meaningful to look at the raw click through rate of a search result because the likelihood of a result being clicked depends not only on its relevance but also the position in which it is displayed. One model of the browsing behavior, the {\em Examination Hypothesis} \cite{RDR07,Craswell08,DP08}, states that each position has a certain probability of being examined and is then clicked based on the relevance of the search snippets. This is based on eye tracking studies~\cite{Claypool01, GJG04} which suggest that users are less likely to view results in higher ranks. Such a position dependent variation in the probability of examining a document is referred to as {\em position bias}. 
 These position bias values can be used to correct the observed click through rates at different positions to obtain a better estimate of the relevance of the document\footnote{We note that click through rate measure need to be combined with other measures like dwell time as the clicks reflect the quality of the snippet rather than the document. Since this study focuses on click through rates, we interchangeably use the term document even though it may refer to the search snippet.}. This raises the question of how one should estimate the effect of the position bias. One method to estimate the position bias is to simply compute the aggregate click through rates in each position for a given query.  Such curves typically show a decreasing click through rate from higher to lower positions except for, in some cases, a small increase at the last position on the result page.  

However, analyzing the click through rate curve aggregated over all queries may not be useful to estimate the position bias as these values may differ with each query.  For example, Broder \cite{Broder02} classified queries into three main categories, {\em viz}, informational, navigational, and transactional. An informational query reflects an intent to acquire some information assumed to be present on one or more web pages. A navigational query, on the other hand, is issued with an immediate intent to reach a particular site.  For example, the query {\tt cnn} probably targets the site \url{http://www.cnn.com} and hence can be deemed navigational. Moreover, the user expects this result to be shown in one of the top positions in the result page. On the other hand, a query like {\tt voice recognition} could be used to target a good collection of sites on the subject and therefore the user is more inclined to more results including those in the lower positions on the page. This behavior would naturally result in a navigational query having a different click through rate curve from an informational query (see Figure~\ref{fig:pb-in}). Further, this suggests that the position bias depends on the query.

%[note: this figure is not accurate as it only considers documents that have been clicked]

\begin{figure}[t]
\label{fig:pb-in}
\centering
\epsfig{file=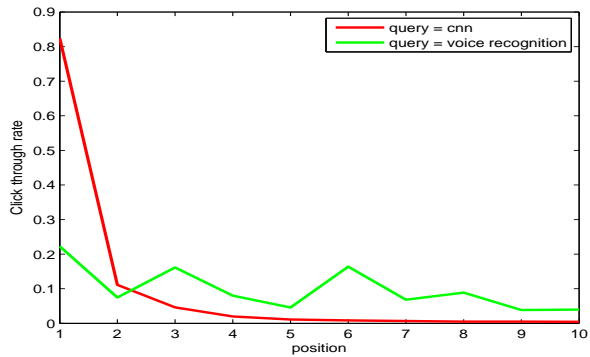, height=2.0in, width=3.5in}
\caption{Click through rate curves over positions $1$ through $10$ for a navigational query, informational  
query. This shows that the click through rate drops differently for different queries
and suggests that the examination probabilities for lower positions may depend on the query.}
\end{figure}

It may be argued that the difference in the click through rate curves for navigational and informational queries arises not from a difference in position bias, but due to the sharper drop in relevance of search results for navigational queries.  In this study, we present a model for analyzing a query specific position bias from the click data and use these biases to derive position independent relevance scores for search results. We note that our model by allowing for the examination to be query specific, subsumes the case of query independent position biases. Our work differs from the earlier works based on Examination Hypothesis in that the position bias parameter is allowed to be query dependent.

\subsection{Contributions of this Study} Our model is based on an
extension of the Examination Hypothesis and states that for a given
query, the click through rate of a document at a particular position
is proportional to the product of its relevance (referred to as
goodness) and query specific position bias.  Based on this model, we
learn the relevance and position bias parameters for different
documents and queries.  We evaluate the accuracy of the predicted CTR by comparing it with the CTR values predicted by the vanilla examination hypothesis and the user browsing model (UBM) of Dupret and Piwowarski~\cite{DP08}.

 We also conduct a cumulative analysis of the derived position bias curves for the different queries and derive a single parametrized equation to capture the general shape of the position bias curve.  The parameter value can then be used to determine the nature of the query as navigational or informational. One of the primary drawbacks of any click-based approach for inferring relevance is the sparsity of the underlying data as a large number of documents are never clicked for a query. We show how to address this issue by inferring the goodness values for unclicked documents through clicks on similar queries.

\section{Related Work}
Several research works have exploited the use of user clicks as feedback in the context of ads and search results. Others have used clicks in conjunction with dwell time and other implicit measures. 

Radlinski and Joachims \cite{Radlinski06} propose a method to learn user preferences for search results by artificially introducing a small amount of randomization in the order of presentation of the results; their idea was to perform flips systematically in the system, until it converges to the correct ranking.  In the context of search advertisements, Richardson {\em et al} \cite{Richardson07} show how to estimate the CTRs for new ads by looking at the number of clicks it receives in different positions.  Similar to our model, they assume the CTR is proportional to the product of the quality of the ad and a position bias.  However, unlike our model, their position bias parameters are query independent. Joachims \cite{Joachims02} demonstrates how click logs can be used to produce training data in optimizing ranking SVMs. 
In another study based on a user behavior, Joachims {\em et al} \cite{Joachims05} suggest several rules for inferring user preferences on search results based on click logs.  For example, one rule `{\tt CLICK > SKIP ABOVE}' means if a user has skipped several search results and then clicks on a later result, this should be interpreted as the user preference for the clicked document is greater than for those skipped above it.  Agichtein {\em et al} \cite{Agichtein06} show how to combine click information based on similar rules along with other user behavior parameters such as dwell time and search result information such as snippets to predict user preferences. Our model, on the other hand, incorporates the CTR values into a system of linear equations to obtain relevance and position bias parameters.  Fox {\em et al} \cite{Fox05} study the relationship between implicit measures such as clicks, dwell time and explicit human judgments.  Craswell {\em et al} \cite{Craswell08} evaluate several models for explaining the effect of position bias on click through rate including one where the click through rate is proportional to the product of relevance and {\em query independent} position bias.  They also propose a cascade model where the click through rate of a document at a position is discounted based on the presence of relevant documents in higher positions. Dupret and Piwowarski~\cite{DP08} present a variant of the cascade model to predict user clicks in the presence of position bias. Specifically, their model estimates the probability of examination of a document given the rank of the document and the rank of the last clicked document.  Guo et al.~\cite{GLK+09} propose a click chain model which is based on the assumption that a document in position $i$ is examined depending on the relevance of the document in the position $i-1$. We will briefly describe these click models next.  Before we do so, we will note the main difference in our work from the earlier works based on the Examination Hypothesis and the Cascade Models is that the position bias parameter is allowed to be query dependent. 

\subsection{Current Click Models}

Two important click models which have been later extended in many works on click models are the {\em examination hypothesis}~\cite{Richardson07} and the {\em cascade model}~\cite{Craswell08}. 

{\em Examination Hypothesis}: Richardson~\cite{Richardson07} proposed this model based on the simple assumption that clicks on documents in different positions are only dependent on the relevance of the document and the likelihood of examining a document in that position.  They assume that the probability of examining the a document at a position depends only on the position and independent of the query and the document. Thus $c_q(d,j) = g_q(d)p(j)$, where $p(j)$ is the position bias of position $j$.

{\em Cascade Model}: This model, proposed by Craswell et al~\cite{Craswell08}, assumes that the user examines the search results top down and clicks when he finds a relevant document. The probability of clicking depends on the relevance of the  This model also assumes that the user stops scanning documents after the first click in the query session. Thus,  the probability of a document $d$ getting clicked in position $j$ is $c_q(d,j) = g_q(d)\prod_{k=1}^{j-1}(1 - g_q(d_k))$ where $d_k$ is the document at rank $k$ in the order presented to the user.In some extensions to this model, other models have considered multiple clicks in a query session.  

{\em Dependent Click Model:} The dependent click model (DCM) proposed by Guo et al~\cite{GLW09} generalizes the Cascade Model to multiple clicks. Once a document has been clicked the next position $j$ may be examined with probability $\gamma(j)$. Thus in a user session if $C_j$ is a binary variable indicating the presence of a click at document $d_j$ at position $j$ then
$Pr(C_{j}=1 | C_{j-1}=1) = g_q(d)\gamma(j)$ and $Pr(C_{j}=1 | C_{j-1}=0) = g_q(d_j)$.

{\em User Browsing Model}: The user browsing model (UBM) proposed by Dupret and Piwowarski~\cite{DP08} is a variant of DCM where the examination parameter $\gamma$  depends not only on the current position but also on the position of the last click.  They assume that given a sequence of click observations in positions $C_{1:j-1}$ the probability of examining position $j$ depends on the position $j$ and the distance $l$ of the position $j$ from the last clicked document ($j=0$ if no document is clicked) and is given by a parameter $\gamma(j,l)$ that is independent of the query. Thus $Pr(C_{j}=1|C_{1:j-1}) = g_q(d_j)\gamma(j,l)$ where $l$ is the distance to the last clicked position. 

{\em Click Chain Model}: This model, due to Guo et al.\cite{GLK+09} is a generalization of DCM that uses Bayesian inference to infer the posterior distribution of document relevance. Here if there is a click on the previous position, the probability that the next document is examined depends on the relevance of the document on the previous position.

Our model is a simple variant of the Examination Hypothesis where the position bias parameter $p(j)$ is allowed to depend on the query $q$. Unlike most prior works out model allows for query specific position bias parameters.

\section{Preliminaries and Model} 
This study is based on the analysis of click logs of a commercial search engine. Such logs typically capture information like the most relevant results returned for a given query and the associated click information for a given set of returned results. In the specific logs that we analyze, each entry in the log has the following form - a query $q$, the top $k$ (typically equal to $10$) documents {\cal D}, the click position $j$, the clicked document $d \in {\cal D}$. Such click data can be be used to obtain the aggregate number of clicks $a_q(d,j)$ on $d$ in position $j$ and the number of impressions of document $d \in {\cal D}$ in position $j$, denoted by $m_q(d,j)$, by a simple aggregation over all records for the given query. The ratio $a_q(d,j)/m_q(d,j)$ gives us the click through rate of document $d$ in position $j$. 

Our study extends the Examination Hypothesis (also referred to as the {\em Separability Hypothesis}) proposed by Richardson {\em et al}~\cite{RDR07} for ads and later used in the context of search results~\cite{Craswell08,DP08}. The examination hypothesis states that there is a position dependent probability of examining a result.  Basically, this hypothesis states that for a given query $q$, the probability of clicking on a document $d$ in position $j$ is dependent on the probability of examining the document in the given position, $e_q(d,j)$ and the relevance of the document to the given query, $g_q(d)$.  It can be stated as

\begin{equation}
\label{eq:eh}
c_q(d,j) = e_q(d,j)g_q(d)
\end{equation}
where $c_q(d,j)$ is the probability that an impression of document $d$ at position $j$ is clicked.  All prior works based on this hypothesis assume that $e_q(d,j) = p(j)$ and depends only on the position and independent of the query and the document. Note that $c_q(d,j)$ can also be viewed as the {\em click through rate} on a document $d$ in position $j$. Thus $c_q(d,j)$ can be estimated from the click logs as $c_q(d,j) = a_q(d,j)/m_q(d,j)$.  We define the {\em position bias}, $p_q(d,j)$, as the ratio of the probability of examining a document in position $j$ to the probability at position $1$.

\begin{definition}[\sc Position Bias]
For a given query $q$, the position bias for a document $d$ at position $j$ is defined as $p_q(d,j) = e_q(d,j)/e_q(d,1)$.
\end{definition}

Next we define the goodness of a search result $d$ for a query $q$ as follows.

\begin{definition}[\sc Goodness]
We define the goodness (relevance) of document $d$, denoted by $g_q(d)$, to be the probability that document $d$ is clicked when shown in position $1$ for query $q$, i.e., $g_q(d) = c_q(d,1)$.
\end{definition}
\begin{remark}
Note that our definition of goodness only seems to measure the relevance of the search result snippet rather than the
relevance of the document $d$. Although this merely a simplification in this study, ideally one needs to combine click through information with other user behavior such as dwell time to capture the relevance of the document.
\end{remark}

The above definition of goodness removes the effect of the position from the click through rate of a document(snippet) and reflects the true relevance of a document that is independent of the position at which it is shown. Having defined the important concepts in our study, we will now state the basic assumption on which our model is based.

\begin{hypothesis}[Document Independence]
\label{hyp:independence}
The position bias $p_q(d,j)$ depends only on the position $j$ and query $q$ and is independent of the document $d$.
\end{hypothesis}

Therefore, we will drop the dependence on $d$ and denote the bias at position $j$ as $p_q(j)$. 
Furthermore, by definition, $p_{q}(1) = 1$ and each entry in the query
log will give us the equation 
\begin{equation}
\label{eq:modbasic}
c_q(d,j) = g_{q}(d)p_{q}(j).
\end{equation}

For a fixed query $q$, we will implicitly drop the $q$ from the subscript for
convenience and use $c(d,j) = g(d)p(j)$.

We note that similar models based on product of relevance and position bias have been used in prior work \cite{Radlinski06, Craswell08}.  However, the main difference in our work is that the position bias parameter $p(j)$ is allowed to depend on the query whereas earlier works assumed them to be global constants independent of the query.

\section{Learning the goodness and position bias parameters}
\label{sec:reg}

In this section we show how to compute the values $g(d)$ and $p(j)$ for a given query based on the above model. Note that different document, position pairs in the click log associated with a given query give us a system of equations $c(d,j) = g(d)p(j)$ that can be used to learn the latent variables $g(d)$ and $p(j)$. Note that the number of variables in this system of equations is equal to the number of distinct documents, say $m$, plus the number of distinct positions, say $n$. We may be able to solve these system of equations for the variables as long as the number of equations is at least the number of variables. However, the number of equations may be more than the number of variables in which case the system is over constrained.  In such a case, we can solve for $g(d)$ and $p(j)$ in such a way that best fit our equations so as to minimize the cumulative error between the left and the right side of the equations, using some kind of a norm.  One method to measure the error in the fit is to use the $L_2$-norm, i.e., $\parallel c(d,j) - g(d)p(j)\parallel_2$.
However, instead of looking at the absolute difference as stated above, it is more appropriate to look at the percentage difference since the difference between CTR values of $0.4$ and $0.5$ is not the same as the difference between $0.001$ and $0.1001$. The basic equation stated as Equation~\ref{eq:modbasic} can be easily modified as 

\begin{equation}\label{eq:logbasic}
\log{c(d,j)} = \log{g(d)} + \log{p(j)}.
\end{equation}

Let us denote $\log g(d)$, $\log p(j)$, $\log c(d,j))$ by $\hat{g}_d$, $\hat{p}_j$, and $\hat{c}_{dj}$, respectively. Let ${\cal E}$ denote the set of all query, document, position combinations in click log.  We get the following system of equations over the set of entries $E_q \in {\cal E}$ in the click log for a given query.

\begin{eqnarray}
\label{eqn:core}
\forall(d,j) \in E_q \mbox{    } \hat{g}_d + \hat{p}_j & = & \hat{c}_{dj} \\
\hat{p}_{1} & = & 0
\end{eqnarray}

We write this in matrix notation $A x = b$ where $x = (\hat{g}_1, \hat{g}_2, \ldots \hat{g}_m, \hat{p}_1, \hat{p}_2, \ldots, \hat{p}_n)$ represents the goodness values of the $m$ documents and the position biases at all the $n$ positions. We solve for the best fit solution $x$ that minimizes $\parallel Ax-b\parallel_2 = \hat{p}_1^2 + \sum_{(d,j) \in E_q} (\hat{g}_d + \hat{p}_j - \hat{c}_{dj})^2$.  The solution is given by $x = (A'A)^{-1} A'b$.

\subsection{Invertibility of $A'A$ and graph connectivity} 

Note that finding the best fit solution $x$ requires that $A'A$ be invertible. To understand when $A'A$ is invertible, for a given query we look at the bipartite graph $B$ (see Figure~\ref{fig:conn-comp}) with the $m$ documents $d$ on left side and the $n$ positions $j$ on the right side, and place an edge if the document $d$ has appeared in position $j$ which means that there is an equation corresponding to $\hat{g}_d$ and $\hat{p}_j$ in Equations~\ref{eqn:core}. We are essentially deducing $\hat{g}_d$ and $\hat{p}_j$ values by looking at paths in this bipartite graph that connect different positions and documents. But if the graph is disconnected we cannot compare documents or positions in different connected components. Indeed we show that if this graph is disconnected then $A'A$ is not invertible and vice versa.

\begin{figure}[t]
\centering
\epsfig{file=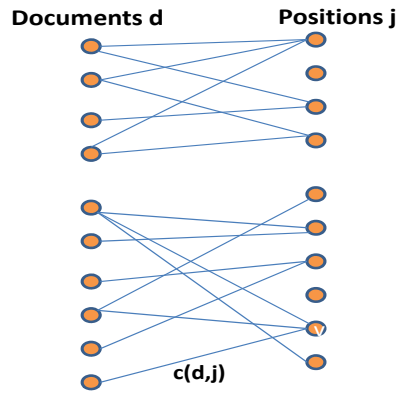, height=2.0in, width=2.0in}
\caption{The bipartite graph $B$ of documents and positions for a given query. The matrix $A$ is invertible only if this graph is connected.  Even if it is disconnected, our model can be used within each connected component.}
\label{fig:conn-comp}
\end{figure}

\begin{claim} 
$A'A$ is invertible if and only if the underlying graph $B$
is connected.  
\end{claim}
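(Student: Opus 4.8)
The plan is to reduce the statement to a claim about the kernel of $A$ and then read that kernel off the combinatorics of $B$. Recall that for any real matrix $A$, the Gram matrix $A'A$ is invertible if and only if $A$ has full column rank, i.e.\ if and only if $\ker A = \{0\}$ (indeed $\mathrm{rank}(A'A) = \mathrm{rank}(A)$, and $A'A x = 0$ implies $\|Ax\|_2^2 = x'A'Ax = 0$). So it suffices to show that the homogeneous system $Ax = 0$ has only the trivial solution exactly when $B$ is connected.

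Next I would unpack $Ax = 0$. Writing $x = (\hat{g}_1,\ldots,\hat{g}_m,\hat{p}_1,\ldots,\hat{p}_n)$, the rows of $A$ coming from Equations~\ref{eqn:core} say $\hat{g}_d + \hat{p}_j = 0$ for every edge $(d,j)$ of $B$, and the anchoring row says $\hat{p}_1 = 0$. The key structural claim is that on each connected component $C$ of $B$ the edge relations leave exactly one degree of freedom: there is a scalar $t_C$ with $\hat{g}_d = t_C$ for every document vertex $d$ in $C$ and $\hat{p}_j = -t_C$ for every position vertex $j$ in $C$. This is proved by propagation along paths: an edge $(d,j)$ gives $\hat{g}_d = -\hat{p}_j$; a length-two path $d - j - d'$ gives $\hat{g}_d = -\hat{p}_j = \hat{g}_{d'}$, and $j - d - j'$ gives $\hat{p}_j = \hat{p}_{j'}$; since $B$ is bipartite, any two document vertices of $C$ are joined by an even-length walk and any document–position pair by an odd-length walk, so an induction on path length (or, more cleanly, walking a spanning tree of $C$) yields the claim. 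Conversely, any such assignment satisfies all edge relations within $C$. Hence the solution space of the edge relations alone is spanned by the disjointly-supported vectors ``$+1$ on documents of $C$, $-1$ on positions of $C$,'' one per component, and has dimension equal to the number of connected components of $B$.

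Finally I would bring in the anchor equation. If $B$ is connected, position $1$ lies in the unique component, so $\hat{p}_1 = -t$ for that component's scalar $t$; the row $\hat{p}_1 = 0$ then forces $t = 0$, hence $x = 0$ and $A'A$ is invertible. If $B$ is disconnected, it has a component $C$ not containing the vertex $j = 1$; setting $t_C = 1$ and $t_{C'} = 0$ for all other components produces a nonzero element of $\ker A$, so $A'A$ is singular. (The degenerate case in which position $1$ itself never appears in the log can be folded into the same argument or simply excluded, since position $1$ always has impressions.)

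The hard part will be the ``one degree of freedom per component'' step: this is the only place where bipartiteness and connectivity genuinely do the work, and the induction (or spanning-tree) argument must be stated carefully so that the sign alternation between the document side and the position side is tracked correctly. Everything else — the Gram-matrix reduction and the use of $\hat{p}_1 = 0$ to pin down the last remaining freedom — is routine once that description of $\ker A$ is in hand.
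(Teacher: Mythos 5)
Your proposal is correct and follows essentially the same route as the paper's own proof: the forward direction is the same propagation-from-position-$1$ argument (you phrase it as pinning down the one remaining degree of freedom per component in $\ker A$, the paper phrases it as uniquely solving for all node values), and the reverse direction is exactly the paper's $\pm\alpha$ shift on a component not containing position $1$, which is your nonzero kernel vector. Your write-up is just a more explicit, kernel-based rendering of the same argument, including the standard fact that $A'A$ is invertible iff $A$ has full column rank.
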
 
\begin{proof} 
If the graph is connected, $A$ is full rank.  This is because, since $\hat{p}_1 =1$, we can solve for all $\hat{g}_d$ for all documents that are adjacent to position $1$ in graph $B$.  Further, whenever we have a value for a node, we can derive the values of all its neighbors in $B$.  Since the graph is connected, every node is reachable from position $1$. So $A$ has full rank implying that $A'A$ is full ranked and therefore invertible.

If the graph is disconnected, consider any component which does not contain position $1$.  We will argue that the system of equations for this component is not full rank.  This is $Ax = Ax'$ for a solution vector $x$ with certain $\hat{g}_d$ and $\hat{p}_j$ values for nodes in the component, and the solution vector $x'$ with values $\hat{g}_d - \alpha$ and $\hat{p}_j + \alpha$, for any $\alpha$.  Therefore, $A$ is not full rank  as we can have many solutions with same left hand side, implying $A'A$ is not invertible.
\end{proof}

\subsection{Handling disconnected graphs}

Even if the graph $B$ is disconnected, we can still use the system of equations to compare the goodness and position bias values within one connected component. This is achieved by measuring position bias values relative to the highest position within the component instead of position $1$. To overcome the problem of disconnected graphs, we solve for the solution that assumes that the average goodness in the different connected components are about equal. This is achieved by adding the following equations to our system:

\begin{eqnarray}
\label{eqn:core1}
\forall(d) \in E_q \mbox{    } \epsilon(\hat{g}_d - \mu) & = & 0 
\end{eqnarray}

where $\mu$ is the average goodness of the documents for the query and $\epsilon$ is a small constant that tends to $0$. $\epsilon$ simply gives  a tiny weight to these system of equations that is essentially saying that the goodness of all the documents are equal (to $\mu$). If the bipartite graph is connected, these additional equations make no difference to the solution as $\epsilon$ tends to $0$. If the graph is disconnected, it combines the solutions in each connected component in such a way as to make the average goodness in all the components as equal as possible.
 
\subsection{Limitations of the Model}
\label{sec:criticism}
We briefly describe some concerns that arise from our model and describe methods to address some of these concerns.

\begin{itemize}

\item{} The Document Independence Hypothesis ~\ref{hyp:independence} may not be true
   as people may not examine lower positions depending on whether they
   have already seen a good result.  Or they may not click on the next
   document if it is similar to previous one. We show a method to
   measure the extent of validity of this Hypothesis in
   Section~\ref{sec:testing}.

\item{} Some of the connected components of the bipartite graph may be small
 if a limited amount of click data available. 

\item{} For any click based method the coverage of rated documents is small as only clicked docs  
can be rated. In Section~\ref{sec:sim} we show how to increase
coverage by inferring goodness values for unclicked documents through
clicks on similar queries.

\end{itemize}

\section{Experimental Evaluation}
\label{sec:expts}
In this section we analyze the relevance and position bias values obtained by running our algorithm on a commercial search engine click data. Specifically, we adopt widely-used measures such as relative error and perplexity to measure the performance of our click prediction model. Throughout this section, we will refer to our algorithm by {\tt QSEH}, the vanilla examination hypothesis by {\tt EH}, and the user browsing model by {\tt UBM}.  The UBM model was implemented using Infer.Net~\cite{InferNet09}. We show that the our model, although much simpler to implement, outperforms {\tt EH}, and {\tt UBM}.

\subsection*{Click data}
We consider a click log of a commercial search engine containing queries with frequencies between $1000$ and $100000$ over a period of one month.  We only considered entries in the log where the number of impressions for a document in a top-$10$ position is at least $100$ and the number of clicks is non-zero. The truncation is done in order to ensure the $c_{q}(d,j)$ is a reasonable estimate of the click probability.  The above filtering resulted in a click log, call it $\mathcal{Q}$, containing 2.03 million entries with 128,211 unique queries and 1.3 query million distinct documents.  One important characteristics that affect the performance of our algorithm is the frequency.  Table~\ref{tab:summary} summarizes the distribution of query frequencies and the average size of the largest component in each frequency range.  It largely follows our intuition that the more frequent queries are more likely to have a search result shown in multiple positions resulting in a larger component size.

Out of the total $2.03$ million entries, we sample around $85,000$ {\tt query, url, pos} triples into the test set in such a way that there is at least one entry for each unique query in the log; the triples are biased towards urls with more impressions.  Let us denote the test set by $\mathcal{T}$.  This gives us a training set $\mathcal{S} = \mathcal{Q} \setminus \mathcal{T}$ of around $1.9$ million entries.  

\begin{table}
\label{tab:summary}
\centering
\begin{tabular}{|c|c|c|} \hline
Query Freq & Number of  & Avg Size of \\
 & queries & the largest component \\
\hline
< 5000 & 144254 & 2.22 \\
5001 - 10000 & 1911 & 7.85 \\
10001- 15000 & 420 & 8.33 \\
15001 - 20000 & 192 & 8.51 \\
20001 - 25000 & 74 & 8.47 \\
25001 - 30000 & 29 & 8.44 \\
30001 - 35000 & 20 & 8.55 \\
35001 - 40000 & 6 & 6.83 \\
40001 - 45000 & 3 & 9.00 \\
45001 - 50000 & 3 & 9.00 \\
> 50000 & 1 & 9.00 \\
\hline
\end{tabular}
\caption{Summary of the Click Data}
\end{table}

\iffalse
As Figure~\ref{fig:lcc} shows, there are many queries in $\mathcal{Q}$ with large connected component sizes.  Again, we see a large skew toward the smaller component sizes.  As we show next, our algorithm indeed outperforms both EH and UBM across all queries and does significantly better for lesser frequent queries as well as those with small connected components.

\begin{figure}[t]
\centering
\epsfig{file=, width=2.0in, height=3.0in, angle=-90}
\caption{Distribution of the size of the largest connected component over the queries in the click log.}
\label{fig:lcc}
\end{figure}
\fi

\subsection*{Clickthrough Rate Prediction}
\label{sec:results}

We compute the relative error between the predicted and observed clickthrough rates for each $(q,d,j)$ triple in the test set $\mathcal{T}$ to measure the performance of our algorithm. We compute the relative error as $|c_q(d,j) - \tilde c_q(d,j)|/c_q(d,j)$, where $\tilde c_q(d,j)$ is the predicted CTR from the model and $c_q(d,j)$ is the actual CTR from the click logs. A good prediction will result in a value closer to zero while a bad prediction will deviate from zero. We present the relative error over all triples in $\mathcal{T}$ as a cumulative distribution function in Figure~\ref{fig:cdf-all}.  Such a plot will illustrate the fraction of queries that fall below a certain relative error. For example, for a relative error of 25\%, {\tt EH} produces 48.3\% queries below this error, {\tt UBM} results in 46.12\% queries, while {\tt QSEH} results in 51.57\% queries below this error - an improvement of 10.6\% over {\tt UBM} and 6.34\% over {\tt EH}.  As we can observe from the figure, while {\tt EH} outperforms {\tt UBM} at smaller errors, the trend reverses at larger errors. In Figure~\ref{fig:cdf-comp}, we present the relative error in a different way keeping the sign of the error. This figure shows that {\tt QSEH} does much better in not over predicting the CTRs when compared to {\tt EH} and {\tt UBM} while it does marginally better than {\tt UBM} when it comes to under-prediction. {\tt QSEH} under predicts by an average 48.6\%, {\tt EH} under-predicts by an average 86.54\%, and {\tt UBM} under-predicts by an average 78.09\%.  The respective number in the case of over-prediction are 44.07\%, 78.00\%, and 48.95\%.

\begin{figure}[t]
\centering
\epsfig{file=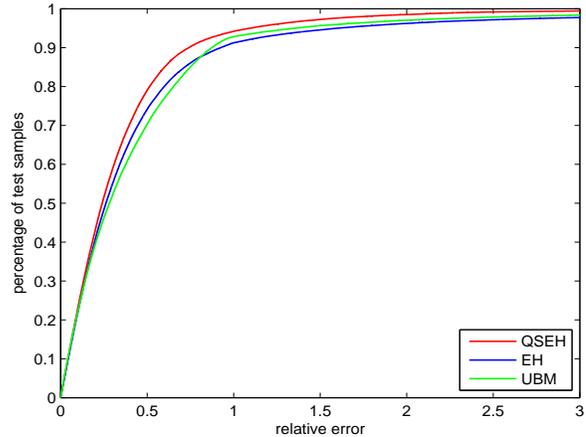, height=2.5in, width=3.5in}
\caption{Cumulative distribution function of the relative error between the predicted and observed CTR values for {\tt QSEH}, {\tt EH}, and {\tt UBM}. For a relative error of $25\%$, {\tt QSEH} outperforms {\tt UBM} by 10.6\% and {\tt EH} by 6.34\%.}
\label{fig:cdf-all}
\end{figure}

\begin{figure}[t]
\centering
\epsfig{file=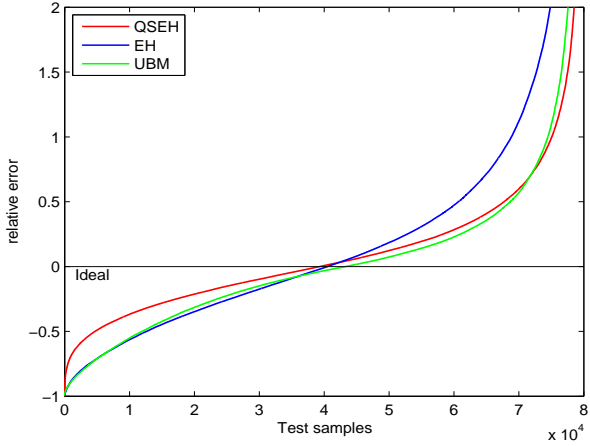, height=2.5in, width=3.5in}
\caption{Relative error for the test data for the {\tt Query Specific EH}, {\tt EH}, and {\tt UBM} methods.}
\label{fig:cdf-comp}
\end{figure}

In another set of experiments, we repeated the above experiment for queries bucketed according to their frequencies to study the effect of query frequency on the CTR prediction.  In this experiment, we estimate average relative error over all test triples for queries in the frequency bucket. The effect of the query frequency is shown in Figure~\ref{fig:cdf-freq}. As the figure illustrates, in the case of {\tt QSEH}, the relative error is stable across all query frequencies while it is higher for the both {\tt EH} and {\tt UBM}.  We note that the stable trends in the figure are for cases where there are reasonable number of queries in that particular frequency range.  We can attribute the large fluctuation in values for frequency greater than $35000$ to the small number of queries in any of the frequency bucket (see Table~\ref{tab:summary}). Finally, we note that the average relative error for {\tt EH} is 39.33\% and {\tt UBM} is 43\%, while it is significantly lower for {\tt QSEH} at 29.19\%.

\begin{figure}[t]
\centering
\epsfig{file=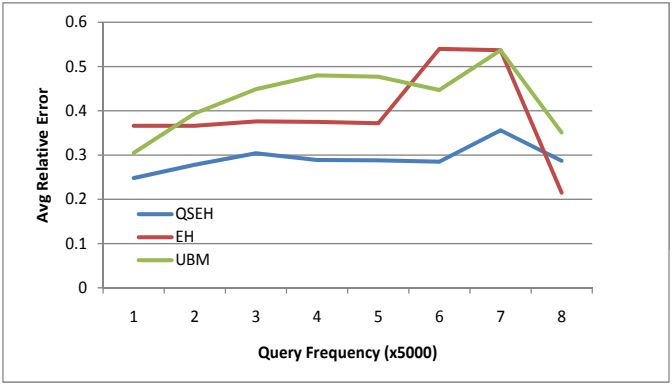, width=2.0in, height=3.5in, angle=-90}
\caption{The relative error between the predicted and observed CTR values for {\tt QSEH}, {\tt EH}, and {\tt UBM} for different query frequencies.  The average relative error for {\tt EH} is 39.33\%, {\tt UBM} is 43\%, and is significantly lower for {\tt QSEH} at 29.19\%.}
\label{fig:cdf-freq}
\end{figure}

Another measure we use to test the effectiveness of our predicted CTRs is perplexity.  We used the standard definition of perplexity for a test set $U$ of query, document, position triples  as \[
P_i = 2^{-\frac{1}{|U|}\sum_{(q,d,j)  \in U} c_q(d,j)\log{\tilde c_q(d,j)}},\] where $c_q(d,j)$ is the observed CTR at position $j$ for query $q$ and document $d$ and $\tilde c_q(d,j)$ is the predicted CTR. This is essentially an exponential function of the cross entropy.  A small value of perplexity corresponds to a good prediction and can be no smaller than 1.0.  We computed the perplexity as a function of the position as well as the query frequency. In the former we group entries in $\mathcal{T}$ by position, and  in the latter, we simply group by  all the queries in a certain frequency range.  Figures~\ref{fig:perplexity-freq} and \ref{fig:perplexity-pos} illustrate the relative performance of {\tt QSEH}, {\tt EH}, and {\tt UBM}. For different query frequencies, the average perplexity of {\tt QSEH} is $1.0671$.  The corresponding values for {\tt EH} and {\tt UBM} are $1.0726$ and $1.0693$ respectively.  In the case of different positions, the corresponding values for {\tt QSEH}, {\tt EH}, and {\tt UBM} are $1.1081$, $1.1286$, and $1.1211$ respectively. 

\begin{figure}[t]
\centering
\epsfig{file=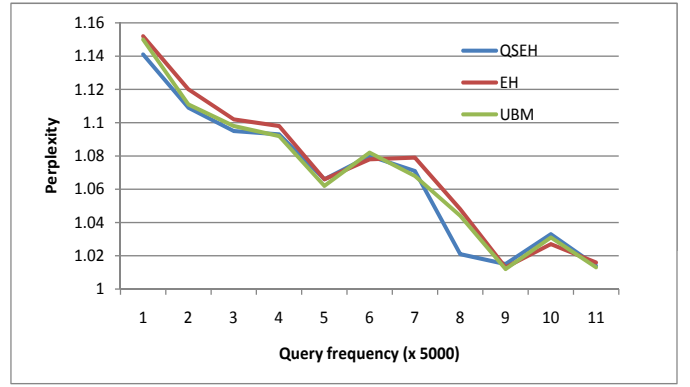, width=2.0in, height=3.5in, angle=-90}
\caption{Perplexity for different query frequencies on the test data. The average perplexity value for {\tt Query Specific EH}, {\tt EH}, and {\tt UBM} is $1.0671$, $1.0726$, and $1.0693$ respectively.}
\label{fig:perplexity-freq}
\end{figure}

\begin{figure}[t]
\centering
\epsfig{file=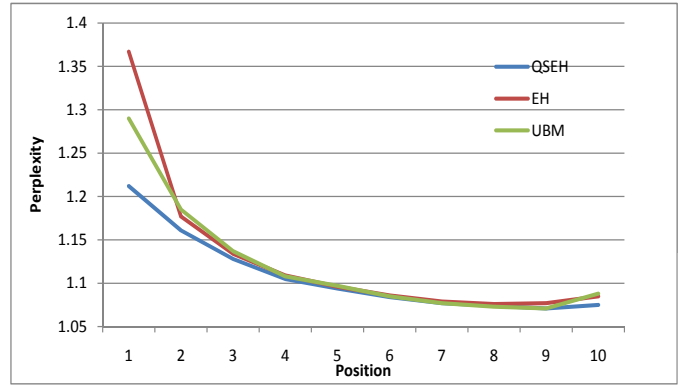, width=2.0in, height=3.5in, angle=-90}
\caption{Perplexity at different positions on the test data. The average perplexity values for {\tt Query Specific EH}, {\tt EH}, and {\tt UBM} are $1.1081$, $1.1286$, and $1.1211$ respectively.}
\label{fig:perplexity-pos}
\end{figure}

\subsection*{Understanding Patterns of position bias}
\label{sec:patposbias}
We also consider a subset of queries, labeled $\mathcal{LC}10$, whose largest component includes all positions 1 through 10 -- these are queries where the bipartite graph B is a fully connected component. This dataset has $112,735$ number of entries with $2,614$ unique queries and $42,119$ unique documents. We use the position bias vectors derived for fully connected components in $\mathcal{LC}10$ to study the trend of the position bias curves over different queries.  A navigational query will have small $p(j)$ values for the lower positions and hence $\hat{p}_j$ ($=\log p(j)$) that are large in magnitude. An informational query on the other hand will have $\hat{p}_j$ values that are smaller in magnitude. For a given position bias vector $p$,  we look at the entropy $H(p) = -\sum_{j=1}^{10}\frac{p(j)}{|p|}\log{\frac{p(j)}{|p|}}$, where $|p|$ is the sum of all the position bias values over all positions. The entropy is likely to be low for navigational queries and high for informational queries. We measured the distribution of $H(p)$ over all the $2500$ queries in $\mathcal{LC}10$ and divided these queries into ten categories of $250$ queries each obtained by sorting the $H(p)$ values in increasing order.

\iffalse
\begin{figure}[t]
\centering
\epsfig{file=, height=2.0in, width=3.5in}
\caption{Distribution of sum(p) over the different queries in $\mathcal{LC}10$.}
\label{fig:distributionsumc}
\end{figure}
\fi

We then study the aggregate behavior of the position bias curves within each of the ten categories. Figure~\ref{fig:bucketpositionbias} shows the median value $\hat{mp}$ of the position bias $\hat{p}$ curves taken over each position over all queries in each category.  Observe that the median curves in the different categories have more or less the same shape but different scale.  It would be interesting to explain all these curves as a single parametrized curve.  To this end, we scale each curve so that the median log position bias $\hat{mp}_6$ at the middle position $6$ is set to $-1$. Essentially we are computing $normalized(\hat{mp}) = -\hat{mp} / \hat{mp}_6$. The $normalized(\hat{mp})$ curves over the ten categories are shown in Figure~\ref{fig:mediannormalpositionbias}. From this figure it is apparent that that the median position bias curves in the ten categories are approximately scaled versions of each other (except for the one in the first category).  The different curves in Figure~\ref{fig:mediannormalpositionbias} can be approximated by a single curve by taking their median; this reads out to the vector $\bv = (0, -0.2952, -0.4935, -0.6792,\\ -0.8673, -1.0000, -1.1100, -1.1939, -1.2284, -1.1818)$.  The aggregate position bias curves in the different categories can be approximated by the parametrized curve $\alpha\bv$.

\begin{figure}[t]
\centering
\epsfig{file=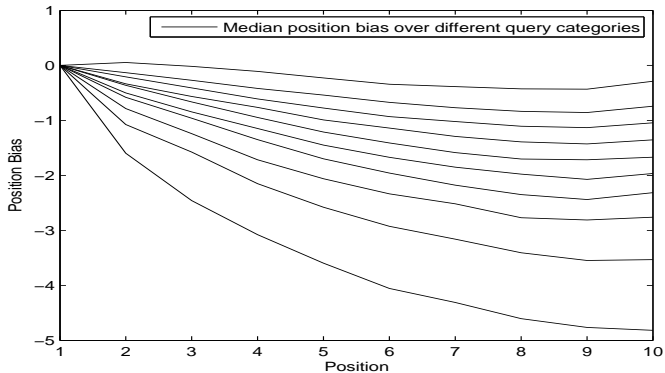, height=2.0in, width=3.5in}
\caption{Position bias curve $\hat{p}_j = \log p(j)$ obtained by taking median
for across $10$ category. The categories are obtained by sorting queries
by sum(p)}
\label{fig:bucketpositionbias}
\end{figure}

\begin{figure}[t]
\centering
\epsfig{file=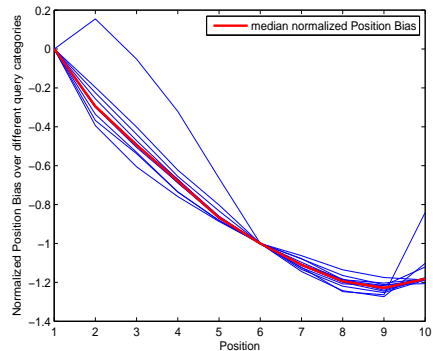, height=2.0in, width=2.5in}
\caption{The best fit curve that describes the general shape of the position bias
curve. This is obtained by taking the median of the normalized position bias curves
over the $10$ categories.}
\label{fig:mediannormalpositionbias}
\end{figure}

Such a parametrized curve can be used to approximate the position bias vector for any query.  The value of $\alpha$ determines the extent to which the query is navigational or information.  Thus the value of $\alpha$ obtained by computing the best fit parameter value that approximates the position bias curve for a query, can be used to classify the query as informational or navigational.  Given a position bias vector $\hat{p}$, the best fit the value of $\alpha$ is obtained by minimizing $\parallel \hat{p} - \alpha \bv\parallel_2$, which results in $\alpha = \bv' \hat{p}/\bv'\bv$. Table~\ref{tab:alphavalues} shows some of the queries in $\mathcal{LC}10$ with the high and low values of $e^{-\alpha}$. Note that $e^{-\alpha}$ corresponds to position bias (since $p(6) = e^{\hat{p}_6}$) at position $6$ as per parametrized curve $\alpha\bv$.

\begin{table}
\label{tab:alphavalues}
\centering
\begin{tabular}{|c|c|c|c|} \hline
 {\tt Query} & {\tt $e^{-\alpha}$}  & {\tt Query} & {\tt $e^{-\alpha}$}  \\
\hline
yahoofinance	& 0.0001  & writing desks	& 0.2919  \\
ziprealty	& 0.0002  & sports injuries	& 0.4250\\
tonight show	& 0.0004 & foreign exchange rates	& 0.7907\\
winzip	& 0.015 & dental insurance	& 0.7944 \\
types of snakes	& 0.1265 &sfo	& 0.8614\\
ram memory	& 0.127 & brain tumor symptoms	& 0.9261 \\
\hline
\end{tabular}
\caption{$e^{-\alpha}$  for a few queries.}
\end{table}

\subsection {Testing the Document Independence Hypothesis}
\label{sec:testing}

Recall that our model is based on the Document Independence
Hypothesis~\ref{hyp:independence}; that is, $p_q(d,j)$ is independent
of $d$. In this section we show a simple method to test this
hypothesis from the click data.

To test the hypothesis we look at the bipartite graph $B$ for a query with documents on one side and positions on the other and each edge $(d,j)$ is labeled by $\hat{c}_{dj}$. We show that cycles in this graph (see Figure~\ref{fig:cyclesinbipartite}) must satisfy a special property.

\begin{figure}[t]
\centering
\epsfig{file=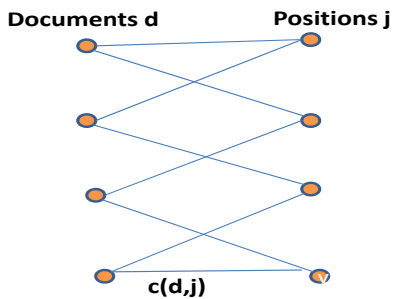, height=1.5in, width=2.0in}
\caption{A cycle $C$ in bipartite graph of documents and positions for a given query. To test the Document Independence Hypothesis
check if $sum(C) = \sum_{(d,j) \in odd\ edges\ of\ C} \hat{c}_{dj} - \sum_{(d,j) \in even\ edges\ of\ C} \hat{c}_{dj} = 0$}
\label{fig:cyclesinbipartite}
\end{figure}

For each edge $(d,j)$ in this graph, we have a $c(d,j)$ obtained from the query log. Let $C=(d_1, j_1, d_2, j_2, d_3, \dots, d_k, j_k, d_1)$ denote a cycle in this graph with alternating edges between documents $d_1, d_2,..,d_k$ and positions $j_1, j_2, .., j_k$ and connecting back at node $d_1$. We now show that our hypothesis implies that the sum of the $\hat{c}_{dj}$ values ($\hat{c}_{dj} = log c(d,j)$) on odd and even edges on the cycle are equal.  This gives us a simple test for our hypothesis by computing the sum for different cycles.

\begin{claim}
\label{clm:testing}
Given a cycle $C = (d_1, j_1, d_2, j_2, d_3, ..., d_k, j_k, d_1)$, our
Independence hypothesis~\ref{hyp:independence} implies that $sum(C) =
\sum_{i=1}^k \hat{c}_{d_ij_i} - \sum_{i=1}^k \hat{c}_{d_{i+1}j_i} = 0$
(where $d_{k+1}$ is the same as $d_1$ for convenience)
\end{claim}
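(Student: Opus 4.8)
The plan is to substitute the model equation~\ref{eq:modbasic} into the alternating sum and watch the position-bias and goodness terms telescope away. Under Hypothesis~\ref{hyp:independence}, for every edge $(d,j)$ of the graph we have $\hat{c}_{dj} = \hat{g}_d + \hat{p}_j$, where $\hat{g}_d = \log g(d)$ and $\hat{p}_j = \log p(j)$ are well-defined quantities attached to the vertices (not the edges) of the bipartite graph $B$. The key observation is that each position vertex $j_i$ on the cycle $C$ is incident to exactly two cycle edges, namely $(d_i, j_i)$ and $(d_{i+1}, j_i)$, one of which contributes $\hat{c}_{d_i j_i}$ with a plus sign and the other $\hat{c}_{d_{i+1} j_i}$ with a minus sign in the expression for $sum(C)$. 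Hence the term $\hat{p}_{j_i}$ appears once with $+1$ and once with $-1$ and cancels. Symmetrically, each document vertex $d_i$ is incident to the two cycle edges $(d_i, j_{i-1})$ and $(d_i, j_i)$ (with $j_0 := j_k$), which again appear with opposite signs, so $\hat{g}_{d_i}$ cancels as well.

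Concretely, I would write
\[
sum(C) = \sum_{i=1}^k \hat{c}_{d_i j_i} - \sum_{i=1}^k \hat{c}_{d_{i+1} j_i}
       = \sum_{i=1}^k (\hat{g}_{d_i} + \hat{p}_{j_i}) - \sum_{i=1}^k (\hat{g}_{d_{i+1}} + \hat{p}_{j_i}),
\]
and then note that the $\hat{p}_{j_i}$ terms cancel termwise between the two sums, leaving $\sum_{i=1}^k \hat{g}_{d_i} - \sum_{i=1}^k \hat{g}_{d_{i+1}}$. Since $d_{k+1} = d_1$, the second sum is just a reindexing of the first (it runs over $d_2, d_3, \dots, d_k, d_1$), so the difference is $0$. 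This completes the argument.

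There is essentially no hard step here — the whole claim is a one-line telescoping identity — but the one place to be careful is the bookkeeping of which edge is ``odd'' versus ``even'' and making sure the indexing wraps around correctly (the convention $d_{k+1} = d_1$, and implicitly $j_0 = j_k$ when I phrase the cancellation in terms of document vertices). I would state the edge-incidence observation explicitly so the reader sees that each vertex label is hit exactly twice with opposite signs; that is really the entire content. It is also worth remarking, as the claim's role in the paper suggests, that the converse flavor matters in practice: if the hypothesis fails, $sum(C)$ need not vanish, so computing $sum(C)$ over many cycles and checking how far it is from zero gives an empirical test of Hypothesis~\ref{hyp:independence}; but proving only the stated implication requires nothing beyond the substitution and cancellation above.
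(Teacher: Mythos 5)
Your proposal is correct and follows essentially the same route as the paper's proof: substitute $\hat{c}_{dj} = \hat{g}_d + \hat{p}_j$ (which is what Hypothesis~\ref{hyp:independence} licenses), cancel the $\hat{p}_{j_i}$ terms, and observe that the goodness sums coincide after reindexing using $d_{k+1}=d_1$. Your extra remarks on vertex-incidence bookkeeping and the empirical use of $sum(C)$ are fine but add nothing beyond the paper's one-line argument.
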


\begin{proof}
We need to show that $\sum_{i=1}^k \hat{c}_{d_ij_i} = \sum_{i=1}^k \hat{c}_{d_{i+1}j_i}$. Note that $\hat{c}_{dj} = \hat{g}_d + \hat{p}_j$.  So $\sum_{i=1}^k \hat{c}_{d_ij_i} = \sum_{i=1}^k \hat{g}_{d_i} + \hat{p}_{j_i}$.  Similarly $\sum_{i=1}^k \hat{c}_{d_{i+1}j_i} = \sum_{i=1}^k \hat{g}_{d_{i+1}} + \hat{p}_{j_i} = \sum_{i=1}^k \hat{g}_{d_i} + \hat{p}_{j_i}$ (since $d_{k+1} = d_1$). 
\end{proof}

Clearly in practice we do not expect $sum(C)$ to be exactly $0$. In fact longer cycles are likely to have a larger error from $0$. To normalize this we consider $ratio(C) = \frac{sum(C)}{\sqrt{ \sum_{i=1}^k |\hat{c}_{d_ij_i}|^2 + \sum_{i=1}^k |\hat{c}_{d_{i+1}j_i}|^2}}$. The denominator is essentially $\parallel C\parallel_2$ where $C$ is viewed as a vector of $\hat{c}_{dj}$ values associated with the edges in the cycle. The number of dimensions of the vector is equal to the length of the cycle.  So $ratio(C) = sum(C)/\parallel C\parallel_2$ is simply normalizing $sum(C)$ by the length of the vector $C$. It can be easily shown theoretically that for a random vector $C$ of length $\parallel C\parallel_2$ in a high dimensional Euclidean space the root mean squared value of $|ratio(C)| = |sum(C)|/\parallel C\parallel_2$ is equal to $1$. Thus, a value of $|ratio(C)|$ much smaller than $1$ indicates that $|sum(C)|$ is biased towards smaller values. This gives us a method to test the validity of the Document Independence Hypothesis by measuring $|sum(C)|$ and $|ratio(C)|$ for different cycles $C$.

%\subsection*{Evaluating $sum(C)$ and $ratio(C)$ over cycles}
%\label{sec:evaltesting}
%As described in Section~\ref{sec:testing}, we can test the Document Independence hypothesis by 

We measured the  quantities $|sum(C)|$ and $|ratio(C)|$ computed over different cycles $C$ in the bipartite graphs of documents and positions over different queries. We found a total of $218,143$ cycles of lengths ranging from $4$ to $20$. Note that since this is the bipartite graph the cycle of the smallest length is $4$ and all cycles must be of even length. Figure~\ref{fig:cyclelendistribution} shows the distribution of the length of the different cycles.

For each cycle $C=(d_1, j_1, d_2, j_2, d_3, \dots, d_k, j_k, d_1)$, we compute the quantity $|sum(C)|$ as described in Claim~\ref{clm:testing}. Figure~\ref{fig:cyclessummedian} shows the distribution of $|sum(C)|$. We also plot $|ratio(C)|$ in Figure~\ref{fig:cyclesratiol2median}. %The denominator is essentially $\parallel C\parallel_2$ where $C$ is viewed as a vector of $ctr$ values associated with the edges in the cycle. The number of dimensions of the vector is equal to the length of the cycle.  So $ratio(C) = sum(C)/\parallel C\parallel_2$ is simply normalizing $sum(C)$ by the length of the vector $C$. It can be easily shown theoretically that for a random vector $C$, of length $\parallel C\parallel_2$ in a high dimensional Euclidean space, the root mean squared value of $|ratio(C)|$, that is $\sqrt {E[|ratio(C}|^2]}$ is equal to $1$. Thus a value of $|ratio(C)|$ much smaller than $1$ indicates that $|sum(C)|$ is biased towards smaller values.

As can be seen from Figure~\ref{fig:cyclessummedian}, the median value of $|sum(C)|$ is bounded by about $1$ and from Figure~\ref{fig:cyclesratiol2median} the median value of $|ratio(C)|$ is less than $0.1$ for all cycle lengths. While the median value $|sum(C)|$ leaves the validity of the Document Independence Hypothesis inconclusive, the small value of $|ratio(C)|$ can be viewed as mild evidence in favor of the hypothesis.

\begin{figure}[t]
\centering
\epsfig{file=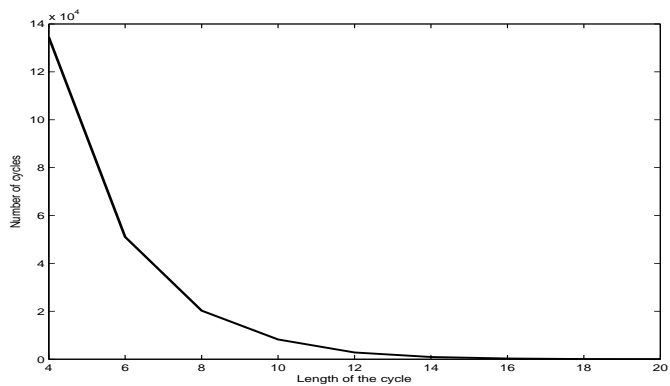, height=2.0in, width=3.5in}
\caption{Distribution of the lengths of the cycles in the bipartite graphs for the queries.}
\label{fig:cyclelendistribution}
\end{figure}

\begin{figure}[t]
\centering
\epsfig{file=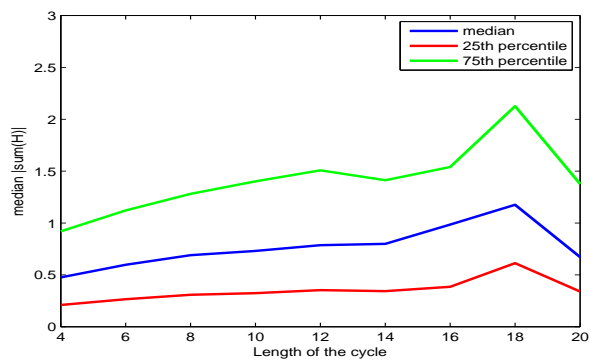, height=2.0in, width=3.5in}
\caption{Distribution of $|sum(C)|$ over cycles of different length.}
\label{fig:cyclessummedian}
\end{figure}

\begin{figure}[t]
\centering
\epsfig{file=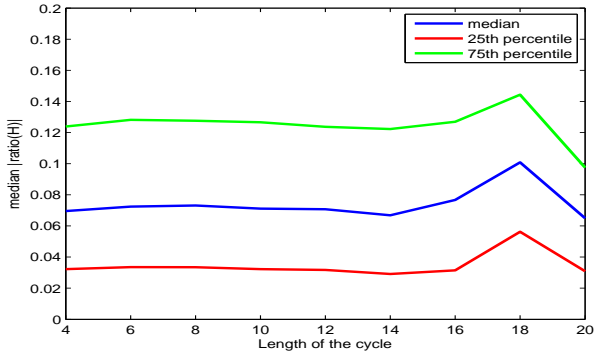, height=2.0in, width=3.5in}
\caption{Distribution of $|ratio(C)|$ over cycles of different length. The fact
that the value of $|ratio(C)|$ is much less than $1$ can be viewed as evidence for
the Document Independence Hypothesis}
\label{fig:cyclesratiol2median}
\end{figure}

\section{Using related queries to increase coverage}
\label{sec:sim}
In addition to finding their use in predicting CTRs, the goodness values obtained from our model
can be employed in designing effective search quality metrics that are very well aligned with user satisfaction.  In this section, we will present a method to infer the goodness values of documents that are not directly associated with a given query (via clicks) and the illustrate the use of these inferred values in computing a click-based feature for ranking search results.

One of the primary drawbacks of any click-based approach is the
sparsity of the underlying data as a large number of documents are
never clicked for a query.  We present a method to extend the
goodness scores for a query to a larger set of documents.  We may be
able to infer the goodness of more documents for a query by looking at
similar queries.  Assuming we have access to a query similarity matrix
$S$, we may infer new goodness values $L_{dq}$ as
\[
L_{dq} = \sum_{q'} S_{qq'}G_{dq'},
\]
where, $S_{qq'}$ denote the similarity between queries $q$ and $q'$.
This is essentially accumulating goodness values from similar queries
by weighting them with their similarity values.  Writing this in
matrix form gives $L = SG$. The question then is how to obtain the
similarity matrix $S$.

One method to compute $S$ is to consider two queries to be similar if
they share a lot of good documents. This can be obtained by taking the
dot product of the goodness vectors spanning the documents for the two
queries.  This operation can be represented in matrix form as $S =
GG'$.  Another way to visualize this is to look at a complete
bipartite graph with queries on the left and documents on the right
with the goodness values on the edges of the graph. $GG'$ is obtained
by first looking at all paths of length 2 between two queries and then
adding up the product of the goodness values on the edges over all the
2-length paths between the queries.

A generalization of this similarity matrix is obtained by looking at
paths of longer length, say $l$ and adding up the product of the
goodness values along such paths between two queries.  This
corresponds to the similarity matrix $S = (GG')^l$.  The new goodness
values based on this similarity matrix is given by $L = (GG')^l G$.
We only use non-zero entries in $L$ as valid ratings.

\subsection*{Relevance Metrics}
\label{sec:metrics}
We measure the effectiveness of our algorithm by comparing the ranking produced when ordering documents for query based on their relevance values to human judgments. We quantify the effectiveness of our ranking algorithm using three well known measures: NDCG, MRR, and MAP. We refer the reader to \cite{Zaragoza04} for an exposition on these measures.  Each of these measures can be computed at different rank thresholds $T$ and are specified by NDCG@T, MAP@T, and MRR@T.  In this study we set $T = {1, 3, 10}$.

The normalized discounted cumulative gains (NDCG) measure\cite{} discounts the contribution of a document to the overall score as the document's rank increases (assuming that the most relevant document has the lowest rank). Higher NDCG values correspond to better correlation with human judgments. Given an ranked result set $Q$, the NDCG at a particular rank threshold $k$ is defined as \cite{Jarvelin02} \[ NDCG(Q,k) = \frac{1}{|Q|}\sum_{j=1}^{|Q|} Z_k\sum_{m=1}^k\frac{2^{r(j)} - 1}{\log(1+j)}, \]  where $r(j)$ is the (human judged) rating (0={\tt bad}, 2={\tt fair}, 3={\tt good},  4={\tt excellent}, and 5= {\tt definitive}) at rank $j$ and $Z_k$ is  normalization factor calculated to make the perfect ranking at $k$ have an NDCG value of 1.

The reciprocal rank (RR) is the inverse of the position of the first relevant document in the ordering.  In the presence of a rank-threshold $T$, this value is $0$ if there is no relevant document in positions below this threshold.  The {\em mean reciprocal rank} (MRR) of a query set is the average reciprocal rank of all queries in the query set.

The average precision of a set of documents is defined as the $\frac{\sum_{i = 1}^nRelevance(i)/i}{\sum_{i=1}^nRelevance(i)}$, where $i$ is the position of the documents in the range $[1,10]$, and $Relevance(i)$ denotes the relevance of the document in position $i$. Typically, we use a binary value for $Relevance(i)$ by setting it to $1$ if the document in position $i$ has a human rating of {\tt fair} or more and $0$ otherwise. The {\em mean average precision}(MAP) of a query set is the mean of the average precisions of all queries in the query set.

\subsection*{Isolated Ranking Features}
One way to test the efficacy of a feature is to measure the effectiveness of the ordering produced by using the feature as a ranking function.  This is done by computing the resulting NDCG of the ordering and comparing with the NDCG values of other ranking features. Two commonly used ranking features in search engines are BM25F~\cite{Robertson93} and PageRank. BM25F is a content-based feature while PageRank is a link-based ranking feature. BM25F is a variant of BM25 that combines the different textual fields of a document, namely title, body and anchor text. This model has been shown to be one of the best-performing web search scoring functions over the last few years \cite{Zaragoza04, Craswell05}. To get a control run, we also include a random ordering of the result set as a ranking and compare the performance of the three ranking features with the control run.

We compute the NDCG scores for this algorithm. We start with a
goodness matrix $G$ \iffalse obtained from the previous Section~\ref{sec:ccc} \fi
with $\gamma = 0.6$ containing $936606$ non-zero
entries. Figure~\ref{fig:simndcg} shows the NDCG scores parameter $l$
set to $1$ and $2$ respectively.  The number of non-zero entries
increase to over $7.1$ million for $l = 1$ and over $42$ million for
$l = 2$.  However, the number of judged <query, document> pairs only
increase from $74781$ for $l = 2$ to $87235$ for $l = 1$.  This
implies that most of the documents added by extending to paths of
length $2$ are not judged results in the high value of NDCG scores for
the Random ordering.  If we were to judge all these `holes' in the
ratings, we think that we will see a lower NDCG score for the Random
ordering.

\begin{figure}[t]
\centering
\epsfig{file=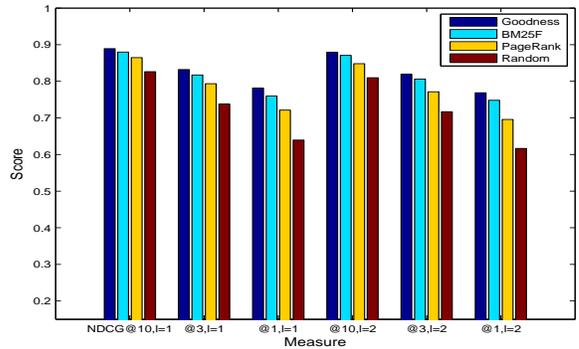, height=2.0in, width=3.5in}
\caption{Since many relevant documents for a query may have no click data we
infer Goodness values by using similar queries.
NDCG scores after thus increasing coverage by  with $l=1$ and $2$ respectively.}
\label{fig:simndcg}
\end{figure}

\section{conclusions}
In this paper, we presented a model based on a generalization of the Examination Hypothesis that states that for a given query, the user click probability on a document in a given position is proportional to the relevance of the document and a query specific position bias. Based on this model we learn the relevance and position bias parameters for different queries and documents.  We do this by translating the model into a system of linear equations that can be solved to obtain the best fit relevance and position bias values. We use the obtained bias curves and the relevance values to predict the CTRs given a query, url, and a position. We measure the performance of our algorithm using well-used metrics like log-likelihood and perplexity and compare the performance with other techniques like the plain examination hypothesis and the user browsing model.

Further, we performed a cumulative analysis of the position bias curves for different queries to understand the nature of these curves for navigational and informational queries. In particular, we computed the position bias parameter values for a large number of queries and found that the magnitude of the position bias parameter value indicates whether the query is informational or navigational.  We also propose a method to solve the problem of dealing with sparse click data by inferring the goodness of unclicked documents for a given query from the clicks associated with similar queries.

\subsection*{Acknowledgements}
The authors would like to thank Anitha Kannan for providing the code to compute the relevance and CTRs using the UBM model. 

\bibliographystyle{plain}
\bibliography{paper}
\end{document}